\newcommand{\naturals}{\mathbb{N}}
\newcommand{\real}{\mathbb{R}}
\newcommand{\realnonneg}{\mathbb{R}_{\ge 0}}
\newcommand{\realpos}{\mathbb{R}_{> 0}}
\newcommand{\map}[3]{#1:#2 \rightarrow #3}
\newcommand{\longthmtitle}[1]{\mbox{}{\textit{(#1):}}}
\newcommand{\setdef}[2]{\{#1 \; | \; #2\}}
\newcommand{\setdefb}[2]{\big\{#1 \; | \; #2\big\}}
\newcommand{\setdefB}[2]{\Big\{#1 \; | \; #2\Big\}}
\newcommand*{\SetSuchThat}[1][]{} 
\newcommand*{\MvertSets}{%
    \renewcommand*\SetSuchThat[1][]{%
        \mathclose{}%
        \nonscript\;##1\vert\penalty\relpenalty\nonscript\;%
        \mathopen{}%
    }%
}
\DeclarePairedDelimiterX \Set [2] {\lbrace}{\rbrace}
    {\,#1\SetSuchThat[\delimsize]#2\,}
\newcommand{\Cc}{\mathcal{C}}
\newcommand{\Hc}{\mathcal{H}}
\newcommand{\Kc}{\mathcal{K}}
\newcommand{\defeq}{\triangleq}
\newtheorem{theorem}{Theorem}
\newtheorem{assumption}[theorem]{Assumption}
\newtheorem{lemma}[theorem]{Lemma}
\newtheorem{remark}[theorem]{Remark}
\newtheorem{corollary}[theorem]{Corollary}
\theoremstyle{definition}
\newtheorem{definition}[theorem]{Definition}
\newcommand{\des}{{\operatorname{des}}}
\newcommand{\on}{{\operatorname{on}}}
\newcommand{\off}{{\operatorname{off}}}
\newcommand{\fl}{{\operatorname{FL}}}
\newcommand{\Lie}{\mathcal{L}}
\DeclareMathOperator*{\argmin}{argmin}
\title{\LARGE \bf Stability and Safety through Event-Triggered Intermittent Control \\ with Application to Spacecraft Orbit Stabilization  }
\author{Pio Ong, Gilbert Bahati, and  Aaron D. Ames
\thanks{This research is supported in part by Raytheon Technologies and the National Science Foundation (CPS Award \#1932091).}
  \thanks{Pio Ong, Gilbert Bahati, and Aaron D. Ames are with the Department of Mechanical and Civil Engineering, California Institute of Technology, Pasadena, CA 91125, USA. {\tt\small \{pioong,gbahati,ames\}@caltech.edu}}%
}
\begin{document}
\maketitle
\begin{abstract}
In systems where the ability to actuate is a scarce resource, e.g., spacecrafts, it is desirable to only apply a given controller in an intermittent manner---with periods where the controller is on and periods where it is off.  Motivated by the event-triggered control paradigm, where state-dependent triggers are utilized in a sample-and-hold context, we generalize this concept to include state triggers where the controller is off thereby creating a framework for \emph{intermittent control}.  
Our approach utilizes certificates---either Lyapunov or barrier functions---to design intermittent trigger laws that guarantee stability or safety; the controller is turned on for the period for which is beneficial with regard to the certificate, and turned off until a performance threshold is reached.  The main result of this paper is that the intermittent controller scheme guarantees (set) stability when Lyapunov functions are utilized, and safety (forward set invariance) in the setting of barrier functions.
As a result, our trigger designs can leverage the intermittent nature of the actuator, and at the same time, achieve the task of stabilization or safety. We further demonstrate the application and benefits of intermittent control in the context of the spacecraft orbit stabilization problem.
\end{abstract}
\section{Introduction}
Modern control systems typically implement continuous-time controllers in a discrete fashion---often using sample-and-hold.  The controller is, therefore, held at a constant value over a fixed interval or, through the modern paradigm of event-triggered control \cite{PT:07,WPMHH-KHJ-PT:12,RP-PT-DN-AA:15}, over variable intervals determined by a triggering law.  Yet in many systems it is not desirable to keep the controller on at all times, especially if there are finite resources for actuation, as this requires control effort to be constantly expended.  A prime example of such systems are spacecrafts which have limited resources (e.g., fuel and power), and so thrusters can only fire in an intermittent fashion.  The goal of this paper is to capture this notion in a principled fashion through the introduction of \emph{event-triggered intermittent control}.

\subsubsection*{Literature Review}
There are, broadly speaking, two exiting approaches taken in the synthesis of intermittent controllers: direct and indirect. The first approach designs the controller that captures the intermittent nature in its implementation.  Typically, intermittent control systems are treated as switched systems, either with time-based switching~\cite{CL-GF-XL:07} or state-based \cite{NY-YS-KK-TN:16,RD-HP:19,QW-YH-GT-MW:16}.  The second approach is an indirect one as it first designs the continuous-time controllers and then implements them in an intermittent fashion. Most works \cite{LY-WM-IJ-PE:93,PJG-LW:09,PJG-IL-ML-HG:11,PJG-IL-HG-ML:14} enable intermittent control by prescribing a constant threshold for the error introduced by the intermittent implementation. However, the analyses and methods rely on the systems being linear.  Through the use of certificates, this paper develops an intermittent control framework for nonlinear systems.   

The approach to intermittent control taken in this paper leverages the event-triggered control framework. In this setting, to enable sample-and-hold implementation of a controller, \cite{PT:07} utilizes ISS Lyapunov functions \cite{EDS:08} for their robustness.  Through monitoring the Lyapunov function and ensuring that its time-derivative remains negative, the task of stabilization can be accomplished. Event-triggered control has been employed for intermittent control in \cite{PJG-LW:10}; however, the scheme relies on using a linear predictor, and the result is limited to linear systems.
The main obstruction of using event-triggered control for intermittent control is that the traditional framework developed by \cite{PT:07} does not allow for the possibility of Lyapunov function to increase, which can happen when the control input is set to zero. To this end, there are trigger frameworks, dynamic triggering~\cite{AG:15} and performance-barrier-based triggering \cite{PO-JC:21-tac}, that can accommodate this behavior. In particular, we build our intermittent control results on the work \cite{PO-JC:21-tac} because it already incorporates the concept of control barrier functions~\cite{ADA-SC-ME-GN-KS-PT:19} in its design---this will enable the synthesis of intermittent controllers in the context of safety.  Other examples of event-triggered control in the context of safety include~\cite{GY-CB-RT:19,AJT-PO-JC-AA:21-csl}. The latter work uses the concept of Input-to-state safe barrier function~\cite{SK-ADA:18} to establish minimum inter-event time. This paper builds on the \cite{AJT-PO-JC-AA:21-csl} with the concepts in~\cite{PO-JC:21-tac} to address the intermittent control problem in the context of safety.

\subsubsection*{Statement of Contribution}
This paper presents a novel intermittent control framework for nonlinear systems leveraging certificate-based event triggers.  
To this end, we take inspiration from event-triggered control and its opportunistic nature in sampling control inputs---extended this concept to include periods where the controller is off.   
The first contribution of this paper is an intermittent trigger scheme that opportunistically determines when to switch the controller between on and off in order to achieve (set) stabilization.
Our design is based on monitoring the value of a given input-to-state stable (ISS) Lyapunov function and guaranteeing that overall, it decreases over time. Our second contribution is the extension of this framework to safety, wherein we develop a trigger scheme that guarantees forward invariance of a desired time-varying set. The idea, analogous to the stabilization case, is based on monitoring a strong input-to-state safe barrier function (sISSf-BF). 

The trigger schemes developed and presented in this paper allow continuous-time controllers to be implemented in an intermittent fashion with formal guarantees of stability (via Lyapunov functions) and safety (via barrier functions).  Thus, to the best of our knowledge, this paper presents the first instance of a certificate-based event triggered control scheme that effectively allows for the implementation of continuous-time controllers for nonlinear systems in an intermittent fashion.  
To demonstrate the utility and benefits of intermittent control, we consider stabilization of a spacecraft around a small asteroid in a tight orbit---the spacecraft orbit stabilization problem. 
We provide a detailed exposition of how the theoretic methods developed in this paper can be used to design intermittent control inputs that stabilize the spacecraft to a tube around the desired orbit. 
Simulations demonstrate the effectiveness of our results.

\subsubsection*{Notation}
We use $\naturals$, $\real$, $\realnonneg$, $\realpos$ to denote natural, real, nonnegative, and positive numbers, respectively. For a vector $x\in\real^n$ and a matrix $A\in\real^{n\times n}$, $\|x\|$ and $\|A\|$ denotes the Euclidean norm and the spectral norm, respectively. A function $\map{\alpha}{[0,a)}{\realnonneg}$ with $a>0$, is of class-$\Kc_\infty$ if $\alpha(0)=0$, $\alpha$ is strictly increasing, and $\lim_{s\rightarrow\infty} \alpha(s) = \infty$. Given a system and an initial condition $x_0$, let $\map{x}{\realnonneg}{\real^n}$ be a solution, a time-varying set $\Cc$ is forward invariant if $x(t)\in\Cc(t)$ for all time whenever $x_0\in\Cc(0)$.

\section{Opportunistic Controller Implementation}
This section reviews the concept of event-triggered control. We consider the nonlinear control system: 
\begin{equation}\label{eq:NL}
\dot x = f(x,u)
\end{equation}
where $x\in\real^n$ is the state and $u\in\real^m$ is the control input.
In practice, a control signal of a state-feedback controller $\map{k}{\real^n}{\real^m}$ is digitally implemented in a sample-and-hold fashion. The control signal is updated by sampling the controller at time $t_i$, and it is held constant until the controller next sampled at $t_{i+1}$. The closed-loop system for the time $t\in[t_i,t_{i+1})$ is then given by:
\begin{equation}\label{eq:sah}
\dot x = f(x,k(x(t_i)))=f(x,k(x+e))
\end{equation}
where $e = x(t_i)-x$ is the error introduced from sample-and-hold implementation. Typically,
in order to minimize the error, the controller is sampled periodically and frequently, in which case we can rely on sampling theory to guarantee the expected system behavior. Alternatively, event-triggered control takes a more opportunistic approach by prescribing state-dependent trigger criteria for when the control signal gets adjusted. Under this approach, each prescribed criterion ensures the satisfaction of a certificate that guarantees a desired system objective. For instance, a criterion may be designed to guarantee that a Lyapunov function is monotonically decreasing along the trajectory.

Event-triggered control relies on system robustness to allow for sample-and-hold error. The key idea behind event-triggered control is to monitor the error, and only remove the error (i.e., by adjusting the controller to the correct value) when the system can no longer accommodate for it. One robustness property often used in event-triggered control is from the \emph{Input-to-State Stability (ISS) Lyapunov function} \cite{EDS:08} for the system~\eqref{eq:sah}. In this paper, we also assume the existence of such a function $\map{V}{\real^n}{\real}$ satisfying:
\begin{subequations}
\label{eq:iss}
\begin{align}
    \underline\alpha(\|x\|)&\leq V(x)\leq \overline \alpha(\|x\|),\\ 
    \underbrace{\left. \frac{\partial V}{\partial x} \right|_{x} f(x,k(x + e))}_{\defeq \Lie_fV(x,e)}&\leq-\alpha(\|x\|)+\gamma(\|e\|)     \label{eq:iss_Lie}
\end{align}
\end{subequations}
with class-$\Kc_\infty$ functions $\underline\alpha$, $\overline\alpha$, $\alpha$, and $\gamma$. With the ISS Lyapunov function, \cite{PT:07} provides the trigger design:
\begin{equation}\label{trigger:Paulo}
    t_{k+1}\defeq \min\setdefb{t>t_i}{-\sigma\alpha(\|x(t)\|)+\gamma(\|e(t)\|)=0}
\end{equation}
with a design parameter $\sigma\in(0,1)$ for robustness. This trigger criterion ensures that $\gamma(\|e\|)<\sigma\alpha(\|x\|)$, and the Lyapunov function is monotonically decreasing along any trajectory. Furthermore, as established in \cite{PT:07} under mild assumptions, there exists a minimum inter-event time (MIET) between two consecutive update times, which rules out the possibility of an incomplete solution, i.e., Zeno behavior.

We discuss next the performance-barrier-based trigger framework~\cite{PO-JC:21-tac}, which builds on the trigger design~\eqref{trigger:Paulo}. The framework incorporates the concept of barrier function~\cite{ADA-XX-JWG-PT:17} to abandon the monotonic decrease of the Lyapunov function in order to extend inter-event times. 
In this framework, we use a time-varying set defined with the sublevel set of the Lyapunov function.

\begin{definition}\longthmtitle{Safe Performance Set}
A \emph{performance specification function} is a continuously differentiable function~$\map{S}{\real}{\realnonneg}$. 
Associated with $S$, and an ISS Lyapunov function $V$, is the time-varying \emph{safe performance set}:  
\begin{eqnarray}
\label{eqn:setCc}
\Cc(t) = \setdefb{x\in \real^n}{V(x)\leq S(t)}. 
\end{eqnarray}
\end{definition}

With the performance specification function, the trigger design put forth by~\cite{PO-JC:21-tac} is given by:
\begin{multline}\label{trigger:pbb}
t_{k+1} \defeq \min\setdefB{t>t_i}{\Lie_fV(x(t),e(t))-(1-\sigma)\alpha(\|x(t)\|)\\=\beta(S(t)-V(x(t)))}
\end{multline}
where $\beta$ is a class-$\Kc_\infty$ function. The condition in the trigger above is based on the control barrier function concept, and it ensures that the system trajectory is contained within the safe performance set~$\Cc$, under an assumption on the time-derivative of the performance specification function~$S$. Under this approach, one notable benefit is that the certificate~$V$ is allowed to increase along the trajectory.
We will exploit this special characteristic in the following section.

\section{Event-Triggered Intermittent Control}

This section proposes a trigger scheme that enables the implementation of a state-feedback controller in an intermittent fashion. We consider systems where the controllers gets turned off after each usage. Particularly, the controller is sample-and-held from time $t_i^\on$ to $t_i^\off$, and it is then turned off up to the time $t_{i+1}^\on$. Assuming that $k(0)=0$, the \emph{intermittently-implemented control system} is given by: 
\begin{equation}\label{eq:intermittent}
\dot x = \begin{cases} f(x,k(x(t_i^\text{on}))), &~t\in [t_i^\text{on},t_i^\text{off}) \\
 f(x,k(0)), &~t\in [t_i^\text{off},t_{i+1}^\text{on}) \end{cases}
\end{equation}
where $t_i^\on <t_i^\off$ for all $i\in\naturals$. Our goal is to design trigger conditions that prescribe the sequences $\{t_i^\on\}_{i\in\naturals}$ and $\{t_i^\off\}_{i\in\naturals}$ such that a desired set (or just the origin) is asymptotically stable.

Figure \ref{fig: demo} shows an overview of our trigger scheme. Our main idea is to monitor a certificate function as it evolves in time, more specifically, we ensure that $V$ is strictly decreasing when the controller is on, but only ensure that it is below a performance specification function~$S$ when the controller is off. The latter allows for $V$ to increase, but we can make sure it decreases overall by properly prescribing the performance specification~$S$.

\subsection{Trigger Design: Controller Off}

We start with the trigger design for when to turn the controller off. We follow the simple idea of allowing the controller usage as long as it performs desirably or until it is required to be turned off for other reasons. That is, we define the trigger:
\begin{subequations}\label{trigger:off}
\begin{multline}\label{trigger:off_deriv}
  t_i^* \defeq  \min\setdefb{t>t_i^\on}{\\\Lie_fV(x(t),e_i^\on(t))+(1-\sigma)\alpha(\|x(t)\|)=0}
\end{multline}
where $e_i^\on(t)= x(t_i^\on)-x(t)$. Then with $T_{\max} \in \realpos \cup \infty$ as the maximum time the controller can remain on, we turn the controller off at:
\begin{equation}
  t_i^\off \defeq \min \{t_i^*,t_i^\on+T_{\max}\}.  
\end{equation}
\end{subequations}
By design, the trigger assures $\frac{dV}{dt}<(\sigma-1)\alpha(\|x(t)\|)$  between time $t_i^\on$ and $t_i^\off$, and the certificate~$V$ decreases during the time interval. In addition, the work~\cite{PT:07} provides a MIET for the trigger design~\eqref{trigger:off_deriv}, which we can use to establish a MIET for our trigger design~\eqref{trigger:off}, as stated below. 
\begin{lemma}\longthmtitle{Minimum Inter-event Time}
\label{lem:miet}
Consider the intermittently-implemented control system~\eqref{eq:intermittent}. Given an ISS Lyapunov function~\eqref{eq:iss} for the sample-and-hold system \eqref{eq:sah}, let $t_i^\off$ be determined according to the trigger~\eqref{trigger:off}. Assume $f$, $k$, $\alpha^{-1}$, and $\gamma$ are locally Lipschitz. Then, for a given forward invariant compact set $\Omega$, if $x(t_i^\on)\in\Omega$, then there exists a MIET $\tau$ such that $t_i^\off-t_i^\on\geq \tau$.
\end{lemma}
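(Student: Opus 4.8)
The plan is to reduce the claim to the minimum inter-event time result already established in \cite{PT:07} for the trigger \eqref{trigger:Paulo}, and then argue that our off-trigger \eqref{trigger:off} cannot fire earlier than that benchmark (up to the harmless cap $T_{\max}$). First I would observe the algebraic relationship between the two triggers: the condition defining $t_i^*$ in \eqref{trigger:off_deriv} is $\Lie_fV(x,e_i^\on) + (1-\sigma)\alpha(\|x\|) = 0$, i.e. $\Lie_fV(x,e_i^\on) = -(1-\sigma)\alpha(\|x\|)$. Using the ISS bound \eqref{eq:iss_Lie}, $\Lie_fV(x,e) \le -\alpha(\|x\|) + \gamma(\|e\|)$, so at any time strictly before $t_i^*$ we have $-(1-\sigma)\alpha(\|x\|) > \Lie_fV(x,e_i^\on) $ is not the right direction — instead I would note that as long as $\gamma(\|e_i^\on\|) \le \sigma\alpha(\|x\|)$ (which is exactly the invariant maintained up to the Tabuada trigger time $t_{k+1}$ in \eqref{trigger:Paulo}), we get $\Lie_fV(x,e_i^\on) \le -\alpha(\|x\|) + \sigma\alpha(\|x\|) = -(1-\sigma)\alpha(\|x\|)$, so the defining equality in \eqref{trigger:off_deriv} is not yet met. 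Hence $t_i^* \ge t_{k+1}$ where $t_{k+1}$ is the Tabuada event time computed from $t_i^\on$.

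Next I would invoke \cite{PT:07}: under the stated local Lipschitz hypotheses on $f$, $k$, $\alpha^{-1}$, $\gamma$, and with the state evolving within the forward invariant compact set $\Omega$ (so that all relevant Lipschitz constants and bounds on $f$, and the ratio $\gamma/\alpha$, are uniform over $\Omega$), the inter-event time of the trigger \eqref{trigger:Paulo} is lower bounded by a positive constant $\tau$ depending only on $\sigma$, on the class-$\Kc_\infty$ functions, and on the Lipschitz/compactness data of $\Omega$. The standard argument there bounds the growth of $\|e_i^\on(t)\|/\|x(t)\|$ by a comparison (Gronwall-type) estimate and shows it takes at least time $\tau$ to reach the threshold $\gamma^{-1}(\sigma\alpha(\|x\|))$-type level from $e_i^\on(t_i^\on)=0$; I would cite this rather than reproduce it. Since $x(t_i^\on) \in \Omega$ and $\Omega$ is forward invariant, the trajectory stays in $\Omega$ throughout $[t_i^\on, t_i^\off)$ (note $V$ is strictly decreasing there by design, which keeps the trajectory in the relevant sublevel set), so the estimate applies uniformly. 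Combining with the previous paragraph: $t_i^\off = \min\{t_i^*, t_i^\on + T_{\max}\} \ge \min\{t_i^\on + \tau, t_i^\on + T_{\max}\}$, and replacing $\tau$ by $\min\{\tau, T_{\max}\}$ (still positive) gives $t_i^\off - t_i^\on \ge \tau$.

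I expect the main obstacle to be the bookkeeping that makes the \cite{PT:07} MIET bound genuinely \emph{uniform} over the interval: one must confirm that the trajectory cannot leave $\Omega$ during $[t_i^\on, t_i^\off)$ before the MIET is concluded, which requires either the forward invariance hypothesis on $\Omega$ directly or the monotone decrease of $V$ (so the state remains in a fixed sublevel set of $V$ intersected with $\Omega$) — a mild circularity to untangle cleanly. A secondary subtlety is handling the case $\alpha(\|x(t_i^\on)\|) = 0$, i.e. $x(t_i^\on) = 0$: then $k(x(t_i^\on)) = k(0) = 0$, the "on" and "off" dynamics coincide, the trigger condition may be met trivially, and one should treat this degenerate point separately (e.g. the claim is vacuous or $t_i^\off$ defaults to $t_i^\on + T_{\max}$). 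Apart from these points, the argument is a direct comparison to the established event-triggered MIET.
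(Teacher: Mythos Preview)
Your proposal is correct and follows essentially the same route as the paper: both arguments use the ISS bound \eqref{eq:iss_Lie} to show that the left-hand side of \eqref{trigger:off_deriv} is dominated by $-\sigma\alpha(\|x\|)+\gamma(\|e_i^\on\|)$, so the Tabuada trigger \eqref{trigger:Paulo} must fire no later than $t_i^*$, and then invoke \cite[Thm.~3.1]{PT:07} for the MIET $\tau^*$ before taking $\tau=\min\{\tau^*,T_{\max}\}$. The paper phrases the comparison as ``the upper bound must reach zero first'' while you phrase it as ``while the Tabuada invariant holds the off-trigger equality cannot be met,'' but these are the same observation; the subtleties you flag (uniformity of the \cite{PT:07} bound over $\Omega$, and the degenerate case $x(t_i^\on)=0$) are not addressed explicitly in the paper either.
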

\begin{proof}
We begin by noting that at time $t_i^\on$, $e_i^\on(t_i^\on) = 0$ by its definition. Then, the left hand side of the trigger condition~\eqref{trigger:off_deriv} is bounded as:
$$
\Lie_fV(x(t_i^\on),0)+(1-\sigma)\alpha(\|(x(t_i^\on)\|) \leq -\sigma\alpha(\|(x(t_i^\on)\|)
$$
from the bound~\eqref{eq:iss_Lie}. Hence, it is negative at the beginning of the interval $[t_i^\on,t_i^\off)$. As the error $e_i^\on$ grows, the trigger condition~\eqref{trigger:off_deriv} monitors and ensures that the left hand side expression remains negative for the duration. Here we note from the bound~\eqref{eq:iss_Lie}:
$$
\Lie_fV(x,e_i^\on)+(1-\sigma)\alpha(\|x\|) \leq -\sigma\alpha(\|x\|)+\gamma(\|e_i^\on\|).
$$
Along the trajectory, the upper bound must reach the value of zero first, and thus:
\begin{align*}
t_i^*  \geq \min\setdefb{t>t_i^\on}{-\sigma\alpha(\|x(t)\|)+\gamma(\|e_i^\on(t)\|)=0}.
\end{align*}
Then, as proven in~\cite[Thm. 3.1]{PT:07}, under the Lipschitzness assumptions, there exists a MIET $\tau^*$ such that the right hand side is lower bounded by $t_i^\on+\tau^*$. Hence, $t_i^*-t_i^\on\geq \tau^*$, and the MIET for the overall trigger design~\eqref{trigger:off} is $\tau = \min\{\tau^*,T_{\max}\}$.
\end{proof}
\vspace{.2cm}

The MIET provided in Lemma~\ref{lem:miet} rules out the possibility of a Zeno solution to the intermittently-implemented control system~\eqref{eq:intermittent}. Thus, if we can enforce the convergence of the certificate $V$ towards zero, we may conclude asymptotic stability of the equilibrium of the system. To this end, we next discuss our trigger design for when to turn the controller back on, in order to guarantee an overall decrease of the certificate $V$ over time.

\begin{figure}[t!]
\centering
\includegraphics[width=\linewidth,height=\textheight,keepaspectratio]{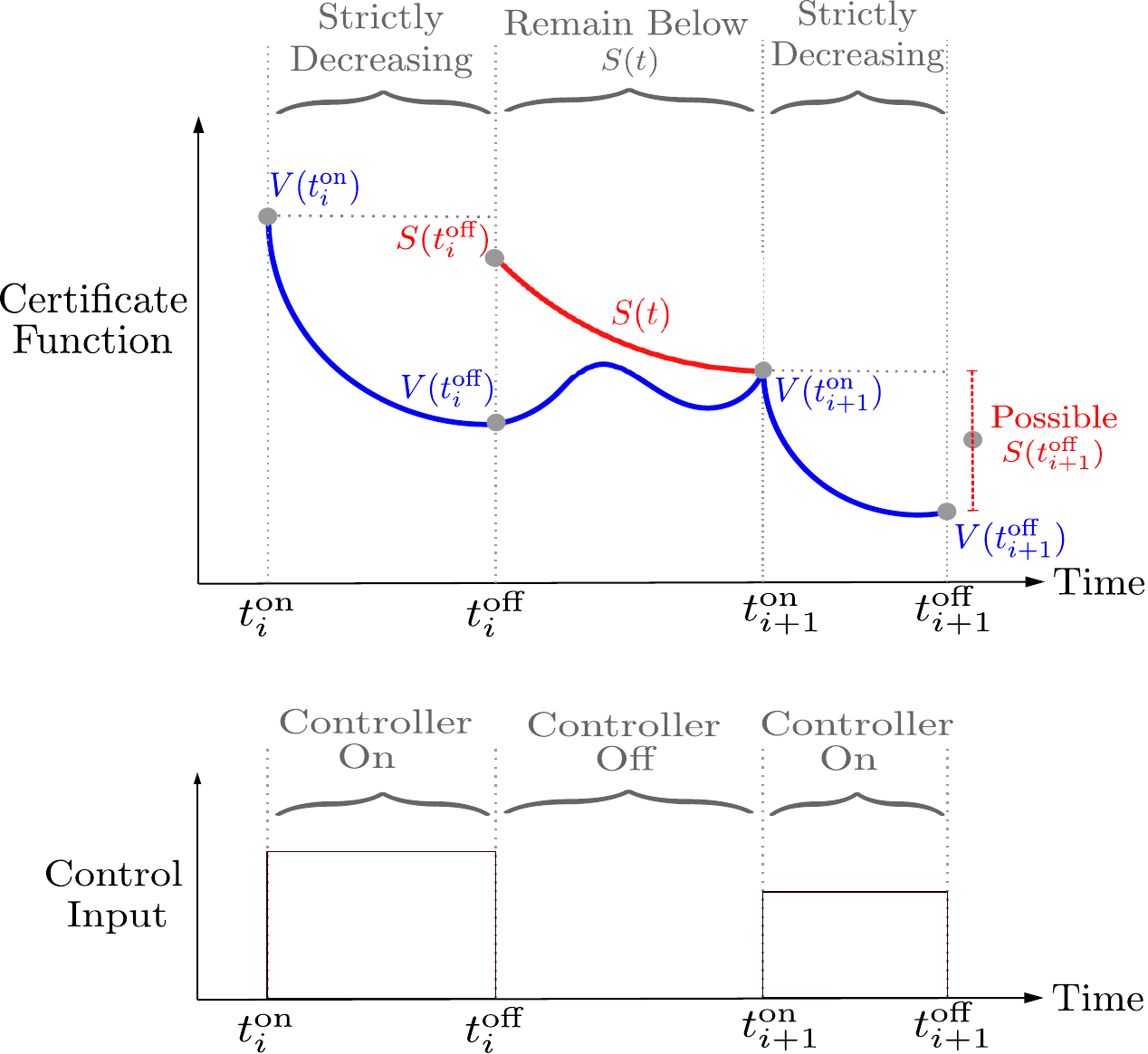}
\caption{Trigger scheme overview for intermittent controller implementation. }
\label{fig: demo}
\end{figure}

\subsection{Trigger Design: Controller On}

The certificate $V$ may increase while the controller is off. When the controller is turned off, the control input is set to zero. The sample-and-hold-error $e_i^\on$ is replaced with $e_i^\off$. Assuming $k(0)=0$, we have $k(0) = k(x+e_i^\off)$. Thus, we find that $e_i^\off=-x$ and:
$$
\frac{dV}{dt}(t) = \Lie_fV(x(t),e_i^\off(t)) \leq -\alpha(\|x(t)\|)+\gamma(\|x(t)\|).
$$
In general, we cannot assume that the origin is asymptotically stable without any control input, so it is possible that the rate of change of $V$ is positive. As such, we leverage the performance-barrier-based trigger design~\eqref{trigger:pbb} idea of allowing the certificate $V$ to increase. We summarize our trigger design in the following result.
\begin{lemma}\longthmtitle{Trigger Design: Controller On}
\label{lem:on}
Consider the intermittently-implemented control system~\eqref{eq:intermittent}. Given an ISS Lyapunov function~\eqref{eq:iss} for the sample-and-hold system \eqref{eq:sah}, let $t_{i+1}^\on$ be determined according to:
\begin{multline}\label{trigger:on}
t_{i+1}^\on \defeq \min\setdefb{t>t_i^\off}{\Lie_fV(x(t),e_i^\off(t))-\frac{dS}{dt}(t)\\=c_{\beta,i}(S(t)-V(x(t)))}
 \end{multline}
with a design parameter $c_{\beta,i}>0$. Assuming $V(x(t_i^\off))\neq S(t_i^\off)$, let $c_{\beta,i}$ be such that:
\begin{equation}\label{eq:beta_param}
c_{\beta,i} > \frac{\Lie_fV(x(t_i^\off),e_i^\off(t_i^\off))-\frac{dS}{dt}(t_i^\off)}{S(t_i^\off)-V(x(t_i^\off))}. 
\end{equation}
Then for time $t\in[t_i^\off,t_{i+1}^\on)$, $S(t)>V(x(t))$ and the safe performance set $\Cc$ is rendered forward invariant.
\end{lemma}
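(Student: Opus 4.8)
The plan is to monitor the signed margin $h(t) \defeq S(t) - V(x(t))$ along the controller-off flow on $[t_i^\off, t_{i+1}^\on)$ and to show it stays strictly positive there; since $h(t) \ge 0$ is precisely the condition $x(t) \in \Cc(t)$, this yields forward invariance of $\Cc$ over the interval (in fact with a strict margin).

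First I would fix the sign of $h$ at the left endpoint. The hypothesis gives $h(t_i^\off) \ne 0$; moreover $x(t_i^\off) \in \Cc(t_i^\off)$ by the inductive construction (the preceding controller-on interval keeps $V$ strictly decreasing, cf.\ the remark after \eqref{trigger:off}, hence keeps the trajectory in $\Cc$, with the case $i=0$ supplied by the initial condition $x(0)\in\Cc(0)$), so in fact $h(t_i^\off) > 0$. Using $h(t_i^\off) > 0$ I can multiply the standing bound \eqref{eq:beta_param} through by the positive quantity $S(t_i^\off) - V(x(t_i^\off))$ to obtain that the trigger function $g(t) \defeq \Lie_f V(x(t), e_i^\off(t)) - \frac{dS}{dt}(t) - c_{\beta,i}(S(t) - V(x(t)))$ satisfies $g(t_i^\off) < 0$. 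Since $t_{i+1}^\on$ is by construction the first instant after $t_i^\off$ at which $g$ vanishes, and $g$ is continuous (by the standing smoothness of $S$, $V$, $f$, $k$), it follows by continuity and connectedness that $g(t) < 0$ for every $t \in [t_i^\off, t_{i+1}^\on)$.

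It then remains to turn this into a statement about $h$. On the controller-off interval the dynamics are $\dot x = f(x,k(0))$ with $e_i^\off = -x$, so $\dot V(x(t)) = \Lie_f V(x(t), e_i^\off(t))$ as noted just before the lemma; hence $\dot h(t) = \frac{dS}{dt}(t) - \Lie_f V(x(t), e_i^\off(t))$, and the inequality $g(t) < 0$ reads exactly $\dot h(t) > -c_{\beta,i}\, h(t)$. A comparison-lemma argument — equivalently, differentiating $t \mapsto h(t)\,e^{c_{\beta,i}(t - t_i^\off)}$ and observing it is nondecreasing — gives $h(t) \ge h(t_i^\off)\,e^{-c_{\beta,i}(t - t_i^\off)} > 0$ throughout the interval. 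Thus $V(x(t)) < S(t)$, i.e.\ $x(t) \in \interior\Cc(t) \subseteq \Cc(t)$ on $[t_i^\off, t_{i+1}^\on)$, which is the asserted forward invariance; combined inductively with the controller-on phases this propagates to all time.

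I expect the delicate point to be the first step: the lemma posits only $h(t_i^\off) \ne 0$, so the strict positivity $h(t_i^\off) > 0$ — on which both the sign of $g(t_i^\off)$ and the final comparison bound rest — has to be imported from the behaviour on the previous (controller-on) interval together with the base case. Beyond that, one should check that a solution exists on $[t_i^\off, t_{i+1}^\on)$, which follows from local existence together with the definition of $t_{i+1}^\on$ as a minimum, and that the differential inequality (hence the comparison estimate) is valid up to but not including $t_{i+1}^\on$; the remaining manipulations are routine.
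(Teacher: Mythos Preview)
Your proposal is correct and follows essentially the same route as the paper's own proof: establish $S(t_i^\off)-V(x(t_i^\off))>0$, use \eqref{eq:beta_param} plus continuity and the definition of $t_{i+1}^\on$ to keep the strict inequality $\Lie_fV-\dot S<c_{\beta,i}(S-V)$ on the whole interval, and then apply the Comparison Lemma (equivalently, the integrating-factor argument you wrote) to obtain the exponential lower bound $S(t)-V(x(t))\ge e^{-c_{\beta,i}(t-t_i^\off)}\bigl(S(t_i^\off)-V(x(t_i^\off))\bigr)>0$. The only cosmetic difference is that you package the argument via the auxiliary functions $h$ and $g$ and make explicit the inductive provenance of $x(t_i^\off)\in\Cc(t_i^\off)$, which the paper simply assumes at the start of its proof.
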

\begin{proof}
Given $x(t_i^\off)\in \Cc(t_i^\off)$ and $V(x(t_i^\off))\neq S(t_i^\off)$, we have $V(x(t_i^\off))<S(t_i^\off)$. Thus, $c_{\beta,i}$ satisfying~\eqref{eq:beta_param} ensures:
$$
\Lie_fV(x(t),e_i^\off(t))-\frac{dS}{dt}(t)< c_{\beta,i}\big(S(t)-V(x(t))\big)
$$
at time $t=t_i^\off$. Then because of the trigger design~\eqref{trigger:on}, the inequality above continues to hold, due to continuity of the trajectory, until it becomes equality at $t_{i+1}^\on$. As a result, it follows from the Comparison Lemma \cite[Lemma 3.4]{HKK:02} that: 
\begin{align*}
\frac{d(V\circ x)}{dt}(t)-\frac{dS}{dt}(t) & ~ < ~   c_{\beta,i}(S(t)-V(x(t))), \nonumber\\
\Rightarrow ~ S(t)-V(x(t)) & ~ > ~  e^{-c_{\beta,i} (t -t_i^\off) } (S(t_i^\off) - V(x(t_i^\off)))  \nonumber\\
\Rightarrow ~ S(t)-V(x(t)) & ~ > ~   0, \qquad \qquad \forall ~ t \in [t_i^\off,t_{i+1}^\on), \nonumber
\end{align*}
since $S(t_i^\off) - V(x(t_i^\off)) > 0$.   Therefore, the performance set $\Cc$ is forward invariant. 
\end{proof}
\vspace{.2cm}

Lemma~\ref{lem:on} provides a trigger for when the controller needs to be back on. The trigger has two design elements: the constant $c_{\beta,i}$ and the performance specification function~$S$. Regarding the former, it may be difficult to find a common $c_\beta$ that satisfy the bound~\eqref{eq:beta_param} for all $i\in\real^n$, which is why we add the subscript $i$ to the design parameter in order to emphasize that they can be different. Consequently, the parameter can be chosen in an online fashion, specifically at each time $t_i^\off$. Similarly, the function~$S$ on the interval $[t_i^\off,t_i^\on)$ can be defined at each time $t_i^\off$. Particularly, $S$ simply needs to be picked so that $S(t_i^\off)>V(x(t_i^\off))$. 
\begin{remark}\longthmtitle{Generalization of the controller on trigger design}
In this paper, the trigger~\eqref{trigger:on} is given with a design parameter $c_{\beta,i}$. This parameter relates the relationship between the speed at which the certificate function~$V$ is allowed to approach the bound~$S$, and the difference, $S-V$, between the two. In general, this constant can be replaced with a class-$\mathcal K_\infty$ function $\map{\beta_i}{\real}{\real}$, and the proof for Lemma~\ref{lem:on} will invoke Nagumo's theorem~\cite{FB-SM:07}, instead of proving the positivity of the difference $S-V$ with the Comparison Lemma. The reason we use a linear function $\beta_i$ with the constant $c_{\beta_i}$ is that the exponential bound for $S-V$ may be useful for analyses in future works. We further believe it is more intuitive to tune a linear constant than a function.~\hfill$\bullet$
\end{remark}

\subsection{Trigger Scheme for Intermittent Implementation}

We now combine the two trigger conditions to obtain result on intermittent control. 

\begin{theorem}\longthmtitle{Trigger Scheme for Intermittent Implementation}
\label{thm:intermittent}
Consider the intermittently-implemented control system~\eqref{eq:intermittent} with the sequences  $\{t_i^\on\}_{i\in\naturals}$ and  $\{t_i^\off\}_{i\in\naturals}$ iteratively determined by trigger designs~\eqref{trigger:off} and \eqref{trigger:on}. In addition to all the assumptions of Lemmas~\ref{lem:miet} and~\ref{lem:on}, let $S$ be a bounded function such that $S(t)\geq V(x(t))$ for time $t\in[t_i^\on,t_i^\off)$ for all $i\in\naturals$. Then the safe performance set $\Cc$ is forward invariant, i.e., $x(t)\in\Cc(t)$ at all time.
\end{theorem}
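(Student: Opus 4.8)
The plan is to establish the containment $x(t)\in\Cc(t)$ on each on-interval and each off-interval separately, and then to argue that these intervals exhaust $[0,\infty)$ by ruling out a finite accumulation of switching times using the MIET of Lemma~\ref{lem:miet}. Throughout I assume $x(0)\in\Cc(0)$, as required by the definition of forward invariance, and that the controller is on at $t=0$ (an initial off-interval is handled identically by the off-interval argument below). A preliminary observation: boundedness of $S$ confines every set $\Cc(t)$ to one fixed compact ball, since $x\in\Cc(t)$ implies $\underline\alpha(\|x\|)\le V(x)\le S(t)\le \bar S\defeq\sup_{t\ge 0}S(t)$, hence $x\in B\defeq\{x\in\real^n:\|x\|\le\underline\alpha^{-1}(\bar S)\}$. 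Thus, while $x(t)\in\Cc(t)$, the trajectory stays in $B$; in particular it has no finite escape time, and it remains in $B$ throughout every on-interval, so $B$ can serve as the compact set $\Omega$ in Lemma~\ref{lem:miet}.

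For the interval-by-interval step, consider an on-interval $[t_i^\on,t_i^\off)$. The theorem's standing hypothesis $S(t)\ge V(x(t))$ is exactly $x(t)\in\Cc(t)$ there, and by continuity of $t\mapsto S(t)-V(x(t))$ this extends to the right endpoint: $V(x(t_i^\off))\le S(t_i^\off)$. Combined with the Lemma~\ref{lem:on} hypothesis $V(x(t_i^\off))\neq S(t_i^\off)$, this yields the strict inequality $V(x(t_i^\off))<S(t_i^\off)$, i.e., $x(t_i^\off)\in\Cc(t_i^\off)$. Since $c_{\beta,i}$ is chosen to satisfy~\eqref{eq:beta_param}, Lemma~\ref{lem:on} then applies on the off-interval $[t_i^\off,t_{i+1}^\on)$ and gives $S(t)>V(x(t))$, i.e., $x(t)\in\Cc(t)$, there. (If the defining set in~\eqref{trigger:off} or~\eqref{trigger:on} is empty, the corresponding interval extends to $+\infty$ and the same bound holds on it for all remaining time, terminating the argument.) Hence $x(t)\in\Cc(t)$ on $[t_i^\on,t_{i+1}^\on)$ for every $i$.

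It remains to check that $\bigcup_i[t_i^\on,t_{i+1}^\on)=[0,\infty)$. By Lemma~\ref{lem:miet}, and since $x(t_i^\on)\in\Cc(t_i^\on)\subseteq B=\Omega$ for every $i$, each on-interval has length at least a common MIET $\tau>0$. Since on- and off-intervals strictly alternate, a finite accumulation point of $\{t_i^\on\}$ would force infinitely many pairwise disjoint on-intervals, each of length $\ge\tau$, inside a bounded window of time---impossible. Therefore $t_i^\on\to\infty$, so the solution is defined on all of $[0,\infty)$ and $\bigcup_i[t_i^\on,t_{i+1}^\on)$ is indeed $[0,\infty)$; combined with the previous paragraph, $x(t)\in\Cc(t)$ for all $t\ge 0$.

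The step I expect to be the main obstacle is the interface between ``the trajectory remains in $\Cc(t)\subseteq B$'' (needed both to place $x(t_i^\on)$ in the compact set required by Lemma~\ref{lem:miet} and to preclude finite escape) and ``the switching times do not accumulate'' (needed so that the per-interval containment actually covers all $t\ge 0$). To avoid circularity I would first run the interval argument on a finite horizon $[0,t_N^\on)$ for each $N$---which already keeps the trajectory in $B$ and licenses the MIET on the first $N$ on-intervals---and only afterwards let $N\to\infty$. A secondary, purely technical point is that $S$ is defined piecewise and need only be one-sided differentiable at the switching instants $t_i^\off$; this is harmless because every continuity and Comparison-Lemma invocation is applied on the interior of a single interval.
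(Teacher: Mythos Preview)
Your proposal is correct and follows essentially the same route as the paper: per-interval containment in $\Cc(t)$ (hypothesis on on-intervals, Lemma~\ref{lem:on} on off-intervals), a uniform compact set from the boundedness of $S$, and the MIET of Lemma~\ref{lem:miet} to rule out Zeno so the intervals cover $[0,\infty)$. The only substantive difference is your choice of compact set: you take the ball $B=\{\|x\|\le\underline\alpha^{-1}(\bar S)\}$, whereas the paper uses the sublevel set $\Omega=\{x:V(x)\le S_{\max}\}$. The latter is the cleaner choice because Lemma~\ref{lem:miet} asks for $\Omega$ to be \emph{forward invariant}, and $\{V\le S_{\max}\}$ is forward invariant directly from $V(x(t))\le S(t)\le S_{\max}$, while your ball $B$ merely contains the trajectory but is not shown to be invariant as a set; swapping $B$ for the sublevel set (which is contained in $B$ anyway) removes this wrinkle without changing your argument.
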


\begin{proof}
From Lemma~\ref{lem:on} and the assumption of the theorem, we ensure that $S(t)\geq V(x(t))$ at time $[t_i^\on,t_{i+1}^\on)$. Consequently, let $S_{\max}\in \realpos$ be the upper bound of $S$, the compact sublevel set $\Omega = \setdefb{x\in\real^n}{V(x)\leq S_{\max}}$ is forward invariant, and we can use Lemma~\ref{lem:miet} to establish a MIET~$\tau$ to lower bound $t_{i+1}^\on-t_i^\on> \tau$. This rules out the possibility of Zeno behavior, and therefore, $\lim_{i\rightarrow\infty}t_i = \infty$. Hence, $x(t)\in \Cc(t)$ for all time, concluding the proof.
\end{proof}
\vspace{.2cm}

Theorem ~\ref{thm:intermittent} offers a trigger scheme that enables an intermittent implementation of the controller~$k$. Note importantly that there is an additional assumption introduced by the theorem on the performance specification function~$S$. That is, the theorem requires $S(t)\geq V(x(t))$ for time $t\in[t_i^\on,t_i^\off)$, However, because the function $S$ does not appear in the trigger design~\eqref{trigger:off} on the interval $[t_i^\on,t_i^\off)$, it can be implicitly defined to meet the requirement. Consequently, we only need to specify $S$ so that $S(t_i^\off)>V(x(t_i^\off))$. Nevertheless, if we want to achieve stabilization tasks, we need to further specify the convergence of~$S$. We formalize this in the next result.

\begin{corollary}\longthmtitle{Sublevel Set Stabilization}\label{cor:stabilize}
For the intermittently-implemented control system~\eqref{eq:intermittent} with the sequences  $\{t_i^\on\}_{i\in\naturals}$ and  $\{t_i^\off\}_{i\in\naturals}$ iteratively determined by trigger designs~\eqref{trigger:off} and \eqref{trigger:on}. Let the performance specification function~$S$ satisfy the assumptions of Theorem~\ref{thm:intermittent}. If $\lim_{t\rightarrow\infty}S$ exists then the sublevel set $\Cc_\infty = \setdefb{x\in\real^n}{V(x)\leq \lim_{t\rightarrow\infty} S(t)}$ is globally asymptotically stable. Furthermore, if $\lim_{t\rightarrow\infty}S=0$, then the origin of the system is globally asymptotically stable.~\hfill$\blacksquare$
\end{corollary}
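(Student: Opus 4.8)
The plan is to build on Theorem~\ref{thm:intermittent}, which already establishes forward invariance of $\Cc(t)$ and, crucially, the non-Zeno property $\lim_{i\rightarrow\infty} t_i = \infty$. Since $\Cc(t) = \setdefb{x}{V(x)\leq S(t)}$ and $S(t)\to S_\infty \defeq \lim_{t\rightarrow\infty}S(t)$, the set $\Cc(t)$ shrinks monotonically (at least asymptotically) toward $\Cc_\infty$. So the goal reduces to proving \emph{attractivity} of $\Cc_\infty$ plus \emph{Lyapunov stability} of $\Cc_\infty$; asymptotic stability is the conjunction of the two.

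For attractivity, I would track the scalar signal $V(x(t))$. On every ``on'' interval $[t_i^\on,t_i^\off)$ the trigger~\eqref{trigger:off_deriv} guarantees $\tfrac{dV}{dt} < (\sigma-1)\alpha(\|x(t)\|) \leq 0$, so $V$ strictly decreases there; on every ``off'' interval $[t_i^\off,t_{i+1}^\on)$ Lemma~\ref{lem:on} gives $V(x(t)) < S(t)$. The first step is to show $\limsup_{t\to\infty} V(x(t)) \leq S_\infty$: pick $\varepsilon>0$, choose $T$ with $S(t)\leq S_\infty+\varepsilon$ for $t\geq T$; on off-intervals past $T$ we already have $V < S_\infty+\varepsilon$, and on on-intervals $V$ only decreases, so once the trajectory's $V$-value drops below $S_\infty+\varepsilon$ it stays there. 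It remains to rule out $V$ being stuck strictly above $S_\infty+\varepsilon$ forever while only decreasing on on-intervals — here I would use the MIET $\tau$ from Lemma~\ref{lem:miet} together with the fact that $S$ is bounded (so the trajectory stays in the compact set $\Omega$), on which $\alpha(\|x\|)$ is bounded below by a positive constant whenever $V(x) \geq S_\infty+\varepsilon > 0$ (using $V(x)\leq\overline\alpha(\|x\|)$ to bound $\|x\|$ away from zero). Each on-interval then produces a decrease of $V$ by at least a fixed amount $\delta>0$, while on off-intervals the net \emph{increase} of $V$ over $[t_i^\off,t_{i+1}^\on)$ is bounded by $S_\infty+\varepsilon$ minus its value at $t_i^\off$, which is negative once we are in the regime $V<S_\infty+\varepsilon$ — so no off-interval can push $V$ back above the threshold once it has dropped below. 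Combining, $V(x(t))$ must eventually fall below $S_\infty+\varepsilon$; since $\varepsilon$ is arbitrary, $\limsup_{t\to\infty} V(x(t)) \leq S_\infty$, i.e. $x(t)\to\Cc_\infty$.

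For Lyapunov stability of $\Cc_\infty$, I would argue that if the trajectory starts in a set $\setdefb{x}{V(x)\leq S_\infty+\eta}$ for small $\eta$, then — possibly enlarging $S$ near $t=0$ if needed, or more simply using that $S(t)\geq V(x(t))$ on on-intervals and the decrease property — $V(x(t))$ never exceeds $\max\{S_\infty+\eta,\sup_t S(t)\}$ and in fact stays controlled; invoking $\underline\alpha(\|x\|)\leq V(x)$ then translates this into the required $\varepsilon$-$\delta$ statement about distance to $\Cc_\infty$. The case $S_\infty = 0$ is then immediate: $\Cc_\infty = \{0\}$ (using $\underline\alpha(\|x\|)\leq V(x)\leq 0 \Rightarrow x=0$), and global asymptotic stability of the origin follows from the global asymptotic stability of $\Cc_\infty$.

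I expect the main obstacle to be the attractivity step, specifically making rigorous the claim that the accumulated decrease over infinitely many on-intervals forces $V$ down: one must ensure the on-intervals do not become vanishingly rare or the per-interval decrease vanishingly small. The MIET lower bound $\tau$ on $t_i^\off - t_i^\on$ handles the first concern, and boundedness of the trajectory in $\Omega$ plus the strict bound $\tfrac{dV}{dt}\leq(\sigma-1)\alpha(\|x\|)$ with $\alpha$ class-$\Kc_\infty$ handles the second — but stitching these together while simultaneously controlling the possible growth of $V$ on off-intervals is the delicate bookkeeping that the proof will need to get right.
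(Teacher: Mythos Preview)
Your proposal is correct in spirit but substantially more elaborate than what the paper does, and the elaboration stems from not fully exploiting the conclusion of Theorem~\ref{thm:intermittent}. The paper gives essentially a one-line justification: the corollary ``follows directly from the forward invariance of $\Cc$ from Theorem~\ref{thm:intermittent}.'' The point is that Theorem~\ref{thm:intermittent} already delivers $V(x(t))\leq S(t)$ for \emph{all} $t$, including the on-intervals (recall that $S(t)\geq V(x(t))$ on $[t_i^\on,t_i^\off)$ is a standing hypothesis of that theorem, not something to be re-derived). Hence once $S(t)\leq S_\infty+\varepsilon$ for $t\geq T$, one immediately has $V(x(t))\leq S_\infty+\varepsilon$ for all $t\geq T$; the entire MIET/per-interval-decrease bookkeeping you describe --- and the ``main obstacle'' you anticipate --- simply does not arise. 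Attractivity of $\Cc_\infty$ is then automatic from $S(t)\to S_\infty$.

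What your approach does buy, relative to the paper, is an honest attempt at the Lyapunov-stability half of asymptotic stability, which the paper's one-liner glosses over. So your decomposition into attractivity plus stability is sound and arguably more rigorous; it is only the attractivity leg that you have made harder than necessary by forgetting that the bound $V\leq S$ already holds globally in time.
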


Corollary~\ref{cor:stabilize} follows directly from the forward invariance of $\Cc$ from Theorem~\ref{thm:intermittent}, and it suggests an additional condition on $S$ in order to enforce the certificate $V$ to evolve in a meaningful way. As an example for how to prescribe the function $S$ so that the origin is rendered globally asymptotically stable, $S$ can be defined, for each interval $[t_i^\off,t_{i+1}^\on)$, with $S(t_i^\off)= (V(t_i^\on)+V(t_i^\off))/2$ and $\dot S = -\lambda S$ for some convergence rate $\lambda>0$. An alternative way of picking the function $S$ is to consider it as a performance criteria, cf.~\cite{PO-JC:21-tac}; however, such a method may require an estimation of a convergence parameter which may lead to conservatism in the overall convergence speed.

\begin{remark}\longthmtitle{Inactive Dwell Time}
{\rm
We refer to the length of time at which the controller can remain off as \textit{inactive dwell time}. The inactive dwell time is particularly important because it ties directly to resource conservation. In order to maximize the inactive dwell time, the function $S$ should be picked as large as possible. Such a choice of~$S$ would provide the certificate~$V$ more room to increase before the trigger condition~\eqref{trigger:on} is met. Note however that a large $S$ will slow down convergence towards the desirable set (or the origin). Another method to lengthen the inactive dwell time is by increasing $c_{\beta,i}$, which affects how close the value of $V$ can get to $S$ before the trigger condition is met. However, there is a limit to how much the inactive dwell time is lengthened via this method. That is, when $c_{\beta,i}\rightarrow \infty$, the inactive dwell time is precisely how long it takes for the value of the certificate $V$ to reach the function $S$.
\hfill~$\bullet$
}
\end{remark}

\section{Extension to Safety Constraints}

Thus far, we have synthesized intermittent controllers through the use of an ISS Lyapunov function, $V$, together with a performance specification function, $S$.  In particular, the main result (specifically Theorem \ref{thm:intermittent}) establishes that the trigger laws render the performance set $\Cc(t)$ forward invariant.  The goal of this section is to generalize these results to performance sets that may not be defined by Lyapunov level sets.   We consider a time-varying safe set: 
$$
\Hc(t) \defeq \setdefb{x\in \real^n}{h(x,t) \geq 0}
$$
where $h : \real^n \times \real \to \real$ is a continuously differentiable function, wherein its positivity defines a safety condition.  That is, the system is safe if $x(t) \in \Hc(t)$.  Note that the performance set, $\Cc(t) = \Hc(t)$ for $h(x,t) = S(t) - V(x)$. 

Our starting point is the assumption of safety robustness of a state-feedback system with a controller $k$, now designed to accomplish the task of guaranteeing safety. The assumption we use is the \emph{strong Input-to-State Safety (sISSf)} condition (see also,~\cite{AJT-PO-JC-AA:21-csl}): 
\begin{equation}\label{eq:sISSf}
 \underbrace{\frac{\partial h}{\partial t} + \left. \frac{\partial h}{\partial x} \right|_{(x,t)} f(x,k(x + e))}_{\defeq \dot{h}(x,e,t)}\geq -\omega(h(x,t))-\iota(\|e(t)\|)+d
\end{equation}
where $\omega$ and $\iota$ are class-$\Kc_\infty$ functions, and $d > 0$ is a constant. This condition suggests that the controller $k$ enforces forward invariance and asymptotic stability of a superlevel set~$\Hc_e$ smaller than~$\Hc$. The set $\Hc_e$ grows as $e$ grows in size.

In the stabilization case, we choose to turn off the controller when the certificate decreases too slowly, due the presence of sample-and-hold error. Analogously in the safety case, our idea is to turn off the controller when $\Hc_e$ grows too large (still contained within $\Hc$). Our trigger condition for turning off the safeguarding controller is given by:
\begin{subequations}\label{trigger:safe_off}
\begin{align}
t_i^* &\defeq \min\setdef{t>t_i^\on}{\dot h(x(t),e_i^\on(t),t) \nonumber\\ &\qquad \qquad \qquad \qquad= -\omega(h(x(t),t))+\theta d}\label{trigger:safe_off_deriv}\\
t_i^\off &\defeq \min\{t_i^*,t_i^\on+T_{\max}\}
\end{align}
\end{subequations}
where $\theta\in(0,1)$. As will be shown later, using this trigger design, $x(t)$ will be in the interior of $\Hc(t)$. Not only does this accomplish our goal of establishing safety, being in the interior of the set allow the possibility of turning off the controller because it will take at least some time to reach the boundary of the safe set.  We provide the following trigger design for when to turn the controller back on:
\begin{equation}\label{trigger:safe_on}
t_i^\on \defeq \small{\min\setdef{t>t_i^\off}{\dot h(x(t),e_i^\off(t),t) = -c_{\beta,i}\omega(h(x(t),t))} }   
\end{equation}
where $c_{\beta,i}$ is large enough such that:
$$
c_{\beta,i} >  -\frac{\dot h(x(t_i^\off),e_i^\off(t_i^\off),t_i^\off)}{\omega(h(x(t_i^\off),t_i^\off))}.
$$
This trigger design monitors the safety criterion and determines the time for turning the controller back on as soon as the criterion is violated. Assembling the two trigger designs above, we guarantee safety of system trajectories with the following result.

\begin{theorem}\longthmtitle{Trigger Scheme for Intermittent Implementation - Safeguarding Controller}\label{trigger:safe}
Consider the intermittently-implemented control system~\eqref{eq:intermittent}. Given a barrier function $h$ with the sISSf property~\eqref{eq:sISSf}, let the sequences  $\{t_i^\on\}_{i\in\naturals}$ and  $\{t_i^\off\}_{i\in\naturals}$ iteratively determined by trigger designs~\eqref{trigger:safe_off} and \eqref{trigger:safe_on}. If the function $f$ is bounded and  $\iota$ is Lipschitz continuous, then $\Hc$ is forward invariant, i.e., $x(t)\in\Hc(t)$ for all time.
\end{theorem}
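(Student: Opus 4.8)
The plan is to mirror the proof of Theorem~\ref{thm:intermittent}, with the sISSf inequality~\eqref{eq:sISSf} playing the role of the ISS bound~\eqref{eq:iss_Lie}, the barrier margin $\theta d$ playing the role of $(1-\sigma)\alpha$, and $\omega$ playing the role of $\alpha$. Writing $L_f\defeq\sup_{(x,u)}\|f(x,u)\|<\infty$, I will (i) show that every ``controller on'' interval $[t_i^\on,t_i^\off)$ keeps $x$ strictly inside $\Hc$ and has length at least a fixed $\tau>0$; (ii) show that every ``controller off'' interval $[t_i^\off,t_{i+1}^\on)$ also keeps $x$ strictly inside $\Hc$; and (iii) close the loop by induction over phases together with a no-Zeno argument.

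\emph{On phase.} On $[t_i^\on,t_i^\off)$ the active error is $e_i^\on(t)=x(t_i^\on)-x(t)$, so $\dot h(x(t),e_i^\on(t),t)=\tfrac{d}{dt}h(x(t),t)$ there, and boundedness of $f$ gives $\|e_i^\on(t)\|\le L_f(t-t_i^\on)$. At $t=t_i^\on$ we have $e_i^\on(t_i^\on)=0$, so~\eqref{eq:sISSf} yields $\dot h(x(t_i^\on),0,t_i^\on)\ge -\omega(h(x(t_i^\on),t_i^\on))+d>-\omega(h(x(t_i^\on),t_i^\on))+\theta d$ since $\theta\in(0,1)$ and $d>0$; thus the equality in~\eqref{trigger:safe_off_deriv} is strictly violated at $t_i^\on$, and by continuity of $t\mapsto\dot h(x(t),e_i^\on(t),t)$ it stays violated up to $t_i^*\ge t_i^\off$, so $\tfrac{d}{dt}h(x(t),t)>-\omega(h(x(t),t))+\theta d$ on all of $[t_i^\on,t_i^\off)$. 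Because $\omega(0)=0$, this strict inequality forces $\tfrac{d}{dt}h>0$ whenever $h=0$, so by Nagumo's theorem / comparison with $\dot y=-\omega(y)+\theta d$, $h(x(t_i^\on),t_i^\on)\ge0$ implies $h(x(t),t)>0$ for $t\in(t_i^\on,t_i^\off]$; in particular $x(t_i^\off)\in\interior\Hc(t_i^\off)$. For the MIET, at the trigger instant the equality in~\eqref{trigger:safe_off_deriv} combined with~\eqref{eq:sISSf} forces $\iota(\|e_i^\on\|)\ge(1-\theta)d$, hence $\|e_i^\on\|\ge\iota^{-1}((1-\theta)d)>0$ (using $\iota\in\Kc_\infty$), which by the linear growth bound cannot happen before $t_i^\on+\iota^{-1}((1-\theta)d)/L_f$; therefore $t_i^\off-t_i^\on\ge\tau\defeq\min\{\iota^{-1}((1-\theta)d)/L_f,\,T_{\max}\}>0$.

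\emph{Off phase and assembly.} On $[t_i^\off,t_{i+1}^\on)$ the input is turned off, and, by the choice of $e_i^\off$ as in the stabilization case, $\dot h(x(t),e_i^\off(t),t)=\tfrac{d}{dt}h(x(t),t)$. Since $x(t_i^\off)\in\interior\Hc(t_i^\off)$, the quantity $\omega(h(x(t_i^\off),t_i^\off))$ is strictly positive, so the constant $c_{\beta,i}$ specified below~\eqref{trigger:safe_on} is well defined and yields $\dot h(x(t_i^\off),e_i^\off(t_i^\off),t_i^\off)>-c_{\beta,i}\omega(h(x(t_i^\off),t_i^\off))$; the equality in~\eqref{trigger:safe_on} is therefore strictly violated at $t_i^\off$ and, by continuity, on all of $[t_i^\off,t_{i+1}^\on)$. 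Invoking Nagumo's theorem~\cite{FB-SM:07} (equivalently, comparison with $\dot y=-c_{\beta,i}\omega(y)$, $y(t_i^\off)=h(x(t_i^\off),t_i^\off)>0$, exactly as in the proof of Lemma~\ref{lem:on}) gives $h(x(t),t)>0$ on $[t_i^\off,t_{i+1}^\on]$, so $x(t_{i+1}^\on)\in\interior\Hc(t_{i+1}^\on)$, which is precisely the hypothesis needed to restart the on phase. Starting from $x_0\in\Hc(0)$, the first on phase has $e(0)=0$ and $\dot h(x_0,0,0)\ge-\omega(h(x_0,0))+d\ge d>0$, so $x$ enters $\interior\Hc$; by induction $h(x(t),t)>0$ on every $[t_i^\on,t_{i+1}^\on)$. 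Finally the MIET gives $t_{i+1}^\on\ge t_i^\on+\tau$, so $t_i^\on\to\infty$, the switching times have no finite accumulation point (no Zeno), and $\bigcup_i[t_i^\on,t_{i+1}^\on)=[0,\infty)$; hence $x(t)\in\Hc(t)$ for all $t\ge0$, i.e.\ $\Hc$ is forward invariant.

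\emph{Main obstacle.} The delicate point is keeping $h$ \emph{strictly} positive at the switching instants $t_i^\off$ (and $t_{i+1}^\on$): the off-trigger~\eqref{trigger:safe_on} and its gain $c_{\beta,i}$ are only meaningful when $x(t_i^\off)\in\interior\Hc(t_i^\off)$, so one must exploit the strict margin $\theta d>0$ together with $\omega(0)=0$ to rule out $x(t_i^\off)\in\partial\Hc(t_i^\off)$ — this is where the sISSf constant $d>0$ and the choice $\theta<1$ are essential. By contrast, the MIET is now easy: the global bound $\|e_i^\on(t)\|\le L_f(t-t_i^\on)$ from boundedness of $f$ replaces the more involved inter-event-time argument of~\cite{PT:07} used in Lemma~\ref{lem:miet}, and Lipschitz continuity of $\iota$ enters only to make the inversion and comparison steps well posed.
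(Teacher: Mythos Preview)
Your proof is correct and follows the same architecture as the paper's: strict inequality $\dot h>-\omega(h)+\theta d$ on each on-phase, Nagumo's theorem for forward invariance of $\Hc$ on both phases, strict positivity $h(x(t_i^\off),t_i^\off)>0$ to make $c_{\beta,i}$ well defined, and a MIET to exclude Zeno. The one substantive difference is the MIET step. The paper simply invokes \cite[Thm.~1]{AJT-PO-JC-AA:21-csl}, whereas you extract it directly from the stated hypotheses: at the trigger instant~\eqref{trigger:safe_off_deriv} the equality combined with~\eqref{eq:sISSf} forces $\iota(\|e_i^\on\|)\ge(1-\theta)d$, and the bound $\|e_i^\on(t)\|\le L_f(t-t_i^\on)$ from boundedness of $f$ then yields $\tau=\min\{\iota^{-1}((1-\theta)d)/L_f,\,T_{\max}\}$. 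This is more self-contained, makes transparent exactly where the boundedness-of-$f$ assumption enters, and, as you observe, does not actually require Lipschitz continuity of $\iota$ (only that $\iota\in\Kc_\infty$ is invertible). The paper's route has the advantage of reusing an existing result and presumably covers settings where $f$ is only locally bounded.

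Two small slips that do not affect the argument: the chain $-\omega(h(x_0,0))+d\ge d$ in your initialization is only valid when $h(x_0,0)=0$; if $h(x_0,0)>0$ then $x_0\in\interior\Hc(0)$ already, so the conclusion stands. Likewise, your claim $x(t_{i+1}^\on)\in\interior\Hc(t_{i+1}^\on)$ at the end of the off-phase is stronger than what the comparison argument gives (it yields only $h(x(t_{i+1}^\on),t_{i+1}^\on)\ge0$), but the weaker conclusion is all the next on-phase needs.
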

\begin{proof}
We first note that according to~\cite[Thm. 1]{AJT-PO-JC-AA:21-csl}, there exists a MIET $\tau^*$ for the trigger design~\eqref{trigger:safe_off_deriv}, and thus, $t_i^\off-t_i^\on\geq \min\{\tau^*,T_{\max}\}$. Hence, Zeno behavior is ruled out and all system trajectories will be a complete solution.

At $t_i^\on$, the sample-and-hold error $e_i^\on(t_i^\on)=0$ by definition, so we have from~\eqref{eq:sISSf} that:
$$
\dot h(x(t),e(t),t) > -\omega(h(x(t),t))+\theta d.
$$
The trigger~\eqref{trigger:safe_off} then enforces the above inequality for the duration $[t_i^\on,t_i^\off)$ by continuity of the system trajectory. Consequently during this time duration, if $h(x) <\omega^{-1}(\theta d)$ ($\omega$ is invertible because it is a class-$\Kc$ function), then $\dot h > 0$. In particular, $\dot h > 0$ when $h(x)=0$, and thus according to Nagumo theorem~\cite{FB-SM:07}, $\Hc$ is forward invariant for the time interval $[t_i^\on,t_i^\off)$.

Also, following the deduction above, the function $h$ must evolve during $[t_i^\on,t_i^\off)$ from $h(x(t_i^\on))\geq 0$ to $h(x(t_i^\off))>0$. That is, at time $t_i^\off$, the inequality must be strict. This ensures that the trigger design~\eqref{trigger:safe_on} is well-defined with a possible choice of $c_{\beta,i}$. The bound on $c_{\beta,i}$ directly assures:
$$
\dot h(x(t),e(t),t) > -c_{\beta,i}\omega(h(x(t),t))
$$
at time $t_i^\off$. Then with continuity, the trigger design~\eqref{trigger:safe_on} ensures the inequality holds for the duration $[t_i^\off,t_{i+1}^\on)$. Once again, according to Nagumo theorem~\cite{FB-SM:07}, $\Hc$ is forward invariant for the time interval, concluding the proof.
\end{proof}

Theorem~\ref{trigger:safe} assembles the two trigger designs for turning the controller on and off based on a given barrier function. Our final result shows that our trigger scheme render the desired time-varying set forward invariant. The idea is analogous to the stabilization case. That is, we turn leave the controller on for as long as the controller drives the state away  from the boundary of the safe set. Then after we turn off the controller, we leave it off as long as it is safe.

\section{Application in Spacecraft Orbit Stabilization}

To illustrate the effectiveness of our trigger design, we study the task of stabilizing a satellite to a desired orbit. More specifically, we wish to find a discrete scheduling of satellite thrusters to maneuver the spacecraft around a central body so that it eventually tracks a circular orbit. We consider a satellite whose dynamics are described by Newton's gravitational model defined below, cf. \cite{WK-AS-SB:01}:
\begin{eqnarray}
\label{eqn:spacecraft}
\underbrace{ \begin{bmatrix}\dot r \\ \dot \theta \\ \dot z \\ \ddot r \\ \ddot \theta \\ \ddot z  \end{bmatrix} }_{\dot{x}} =  \underbrace{\begin{bmatrix}\dot r \\ \dot \theta \\ \dot z \\ r\dot \theta^2 - \mu r/(r^2+z^2)^{3/2} \\ -(2/r)\dot r \dot \theta \\ -\mu z/(r^2+z^2)^{3/2}  \end{bmatrix}}_{F(x)}+ \underbrace{\begin{bmatrix}
 0 \\ 0 \\ 0 \\ u_1 \\ u_2/r \\ u_3
 \end{bmatrix}}_{G(x) u} 
\end{eqnarray}
where $q = (r, \theta, z)$ are the cylindrical coordinates describing the radial position, angle, and height, respectively, of the satellite with respect to a desired orbital plane and the state $x = (q,\dot{q})$. Next, we illustrate our approach of designing an intermittent controller that stabilizes a desired orbit.

\begin{remark}\longthmtitle{Generalization to Feedback Linearization Problems} Even though we use the spacecraft orbit stabilization, as an example, to demonstrate our results, the general ideas in the following procedure are applicable to any other feedback linearizable system.~\hfill$\bullet$
\end{remark}
\subsection*{ \text{Task I:} Obtaining a Control Lyapunov Function}

We begin by using feedback linearization to help construct a Control Lyapunov Function (CLF). Feedback linearization allows us to synthesize a coordinate transformation from the nonlinear system \eqref{eqn:spacecraft} to a linear system, allowing us to use linear control techniques to synthesize a CLF. In particular, we consider (vector relative degree 2 \cite{SS:13}) outputs:
$$
y \defeq  \begin{bmatrix}
r-r_\des,  \\ \theta - (\theta_0+\sqrt{\mu/r_\des^3} t) \\  z 
\end{bmatrix} \quad \Rightarrow \quad 
\dot{y}  =  \begin{bmatrix}
 \dot r \\ \dot \theta - \sqrt{\mu/r_\des^3}  \\ \dot z
\end{bmatrix}
$$
where the origin corresponds to a circular orbit with radius $r_\des$.
It follows that $y$ has relative degree 2 since: 
\begin{align}
\label{eq: output dynamics}
\ddot{y}  &= \underbrace{\begin{bmatrix} r\dot \theta^2 - \mu r/(r^2+z^2)^{3/2} \\ -(2/r)\dot r \dot \theta \\ -\mu z/(r^2+z^2)^{3/2}  \end{bmatrix}}_{\Lie_F^2 y(x)} + \underbrace{\begin{bmatrix} 1 & 0 & 0 \\
0 & \frac{1}{r} & 0  \\ 0 & 0 & 1 \end{bmatrix}}_{\Lie_F \Lie_G y(x) } u 
\end{align}
with the decoupling matrix $\Lie_F \Lie_G y(x) $ invertible (we assume $r>0$ along the trajectory). 
Therefore, for $\eta = (y,\dot{y})$, applying the feedback linearizing control input: 
\begin{eqnarray}
\label{eqn:feedbackline}
u = \Lie_F \Lie_G y(x)^{-1}(- \Lie_F^2 y (x)  +v),
\end{eqnarray}
with an auxiliary input $v \in \real^3$, yields the linear system: 
$$
\dot \eta = \underbrace{\begin{bmatrix}
\mathbf 0_{3\times3} & \mathbf{I}_{3\times3} \\ \mathbf 0_{3\times3} & \mathbf 0_{3\times3}
\end{bmatrix}}_{A} \eta + \underbrace{\begin{bmatrix}\mathbf 0_{3\times3} \\ \mathbf{I}_{3\times3}\end{bmatrix}}_{B}v .
$$
Since $(A,B)$ is a controllable pair, the idea is to find a CLF with a gain matrix $K$ such that $(A+BK)$ is Hurwitz resulting in a closed-loop system:
$$
\dot{\eta} = (A+BK)\eta,
$$ 
which renders $\eta = 0$ (i.e. our desired circular orbit with radius $r_{des}$) exponentially stable.  
Particularly, there exists a Lyapunov matrix $P = P^{\top} \succ 0$ satisfying the Continuous-Time Lyapunov Equation (CTLE):
\begin{align} 
\label{eq:CTLE}
   (A+BK)^{\top} P + P(A+BK) = -Q 
\end{align}
for any given $Q = Q^{\top} \succ 0$. The above steps, through feedback linearization, help us synthesize a CLF which certifies stability of the origin of the system. 
More specifically:
\begin{align*}
&V_\fl(\eta) \defeq \eta^{\top} P \eta\\
&\dot{V}_\fl(\eta,v) \defeq \eta^{\top}(A^{\top}P +PA)\eta + 2\eta^{\top}PB v  
\end{align*}
In particular, there exists a control input $v = K \eta$ such that $\dot V_\fl(\eta,K\eta)=-\eta^\top Q \eta <0$, so $V$ is a valid CLF.

It is easy to show that this is also a CLF for the original nonlinear system \eqref{eqn:spacecraft}. In fact, we define the same Lyapunov function by using our coordinate transformation as:
\begin{align}
&V(x,t) \defeq \eta(x,t)^\top P\eta(x,t)\\
&\dot V(x,u,t) \defeq - \eta(x,t)^\top (A^{\top}P +PA)\eta(x,t)\nonumber\\
&\qquad \qquad ~+ 2\eta(x,t)^\top PB\big(\Lie_F^2 y (x) + \Lie_F \Lie_G y(x) u\big). 
\end{align}
Then we know for each $(x,t)$, there exists a 
$$
u^* = \Lie_F \Lie_G y(x)^{-1}(- \Lie_F^2 y (x)  +K\eta(x,t))
$$
such that $\dot V(x,u^*,t) = -\eta(x,t)^\top Q\eta(x,t)$. Although we could use this $u^*$, we follow the theme of this paper of conserving resource and instead use a pointwise optimal controller, as described in the next task.

\subsection*{ \text{Task II:} Controller Design}

Now that we have obtained a CLF, in order design a controller $\map{k}{\real^6}{\real^3}$, we consider a set of admissible inputs for the transformed system:
$$
K_u(x,t) \defeq \setdefb{u\in\real^m}{ \dot{V}(x,u,t) \leq -\eta(x,t)^{\top} Q \eta(x,t) }.
$$
Since we have a feasible constraint, we choose the control pointwise optimally with the following Quadratic Program (QP) based controller:
\begin{align*}\label{eq:CLF-QP}
k(x,t) &= \argmin_{u \in \mathbb{R}^3}||u||^2 \tag{CLF-QP}\\
&\text{s.t.}\ \dot{V}(x,u,t) \leq - \eta^{\top} Q \eta
\end{align*}
Because the controller must satisfy its constraint, the control feedback $u = k(x)$ renders the origin exponentially stable. This is the controller we will be using for our system.

\subsection*{ \text{Task III:} Establishing ISS Lyapunov Function}
Now that we have obtained a controller \eqref{eq:CLF-QP}, we need to establish that the closed-loop system has an ISS Lyapunov function \eqref{eq:iss}, in order to apply our results to implement the controller in an intermittent fashion. Given the sample and hold implementation $u = k(x(t_i))$, we begin by transforming \eqref{eq:sah} to the linear case with:
$$
v=\Lie_F^2 y (x) + \Lie_F \Lie_G y(x) k(x(t_i))
$$
where $k(x(t_i))$ is the control input being held constant at $t_i$.\\
Our goal to establish an ISS Lyapunov Function of the form:
\begin{align}
\label{eq: ISS-linear}
    \left.\frac{dV}{d\eta}\right|_\eta&(A\eta+Bv) \leq-\alpha(\|\eta\|)+\gamma(\|e_i\|) 
\end{align}
where $e_i = \eta(t_i)-\eta$ is the error introduced from holding our control input. To establish \eqref{eq: ISS-linear}, we analyse how $v$ (from the held control input) deviates from  the desired auxiliary input $v_\des$, which is given by: 
$$
v_\des=\Lie_F^2 y (x) + \Lie_F \Lie_G y(x) k(x)
$$
Given $v$, $v_\des$ and $k(x(t_i))$, we can explicitly write the evolution of the Lyapunov function along the trajectory as:
\begin{align*}
    \left.\frac{dV}{d\eta}\right|_\eta&(A\eta+Bv) = \eta^\top(A^\top P+PA)\eta+2\eta^\top PBv\\
    &= \eta^\top(A^\top P+PA)\eta+2\eta^\top PBv_\des \\
        & \qquad+ 2\eta^\top PB(v-v_\des) \\
    &\leq -\eta^\top Q\eta +2\eta^\top PB(v-v_\des)\\
    & = -\eta^\top Q\eta +2\eta^\top PB \Lie_F \Lie_G y(x)(k(x(t_i))-k(x))
\end{align*}

The above equation is not yet in the form described in \eqref{eq: ISS-linear}. In order to proceed with our derivation, we will make a few assumptions.
\begin{assumption}\longthmtitle{Bounded Decoupling Matrix Assumption}
Let $t\mapsto x(t)$ be the solution to the system \eqref{eqn:spacecraft}. We assume $\|\Lie_F \Lie_G y(x(t)\| \leq H$ $\forall$ t.~\hfill$\bullet$
\end{assumption}
For our spacecraft problem, the assumption is reasonable because we will consider only initial conditions such that we can guarantee $r > r_{\min}$ along the trajectory for some $r_{\min}>0$ corresponding to the minimum radius for the satellite to crash into the central body. This allows us to assume that $\Lie_F \Lie_G y(x)$ described in \eqref{eq: output dynamics} is bounded above. 

Given this assumption and noting that $k$ is Lipschitz continuous because it is a QP-based controller with one control affine constraint~\cite{XX-PT-JWG-ADA:15}, we obtain:
$$
\left.\frac{dV}{d\eta}\right|_\eta(A\eta+Bv)\leq -\eta^\top Q\eta +2\eta^\top M\|e_{x,i}\|
$$
where $M = \|PB\|HL_k$ for some Lipschitz constant $L_k$ and $e_{x,i}=x(t_i)-x$. This shows that $V$ is an ISS Lyapunov function with respect to the state deviation error in the original nonlinear coordinate. To get the error in the linear coordinate, we further make the ensuing assumption.

\begin{assumption}\longthmtitle{Relationship Between State Deviations}
There exists a class~$\mathcal K$ function $\psi$ such that the state deviations
of the two different coordinates are related as $\|e_{x,i}\|\leq \psi(\|e_i\|)$
where $e_i = \eta(t_i)-\eta$.~\hfill$\bullet$
\end{assumption} 
This assumption relies on the intuition that the coordinate transformation is diffeomorphic. However, for our problem, the transformation is also time dependent, so it is unclear whether this will be true and is part of our ongoing research.

Finally, with the above assumption:
\begin{align*}
    \left.\frac{d V}{d\eta}\right|_\eta(A\eta+B v) &\leq -\eta^\top Q\eta + 2\eta^\top M\|e_{x,i}\| \\
    & \leq -\lambda_{min}(Q)\|\eta\|^2 + \|2\eta^\top M\| \psi(\|e_{i}\|)\\
    & \leq -\alpha(\|\eta\|)+\gamma(\|e_i\|) 
\end{align*}
where $\alpha$ and $\gamma$ can be found using Young's Inequality \cite{GHH-JEL-GP:52}.
\subsection*{ \text{Task IV:} Intermittent-Control Trigger Design}
Now that we have obtained an ISS Lyapunov Function and the controller \eqref{eq:CLF-QP}, we will apply the results in this paper to implement the controller in an intermittent fashion. 

Particularly, we design a trigger law that corresponds to (\ref{trigger:off}) for triggering the controller off, as follows:
\begin{align*}
  t_i^* = \min\setdefb{t>t_i^\on}{\small{\left.\frac{dV}{d\eta}\right|_\eta}(A\eta+Bv)+(1-\sigma)\eta^\top Q\eta=0}
\end{align*}
Then, once we turn the controller off at $t_i^\off$, we need to prescribe conditions required to find a time to turn the controller back on. To do this, we need to define a \textit{safe performance set} \eqref{eqn:setCc} described by a \textit{performance specification function} $S$. With $\lambda>0$, let $S$ be:
$$
S(t_i^\off)= (V(t_i^\on)+V(t_i^\off))/2,~\dot S = -\lambda S.
$$ 
for time $t\in[t_i^\off,t_{i+1}^\on)$. This helps us set the conditions to turn the controller back on. More specifically, by choosing the design parameter $c_{\beta,i}>0$ satisfying \eqref{eq:beta_param}, the resulting trigger law will be of the form described in \eqref{trigger:on}:
\begin{multline*}
t_{i+1}^\on \defeq \min\setdefb{t>t_i^\off}
{\left.\frac{dV}{d\eta}\right|_\eta(A\eta+Bv)-\dot S(t)\\=c_{\beta,i}(S(t)-V(\eta))}.
\end{multline*}

We simulate our results with an eighth-order harmonics gravity model in MATLAB.
Due to imperfection in the shape of the asteroid, there are gravity disturbances 
unaccounted for by the Newton's gravitational model. Nevertheless, our results illustrate robustness of our controller design. Figure~\ref{fig: trajectory} shows the resulting trajectory. At all times, the spacecraft stays within the safe performance set (transparent red). Note that the figure shows the safe performance set at the final time, so past trajectory may appear out of bounds. In addition, as shown in Figure~\ref{fig: controller} (top),  we report that the certificate~$V$ (blue) remains below the performance specification~$S$ (red) at all times. With the thruster firing limit time of $T_{\max}$ of 10 seconds, the control inputs appears as spikes in Figure~\ref{fig: controller} (middle), and the intermittent nature of the control input is evident. Finally, Figure~\ref{fig: controller} (bottom) verifies that the radial distance (blue) remains within the prescribed safe performance set~$\Cc$ (red) at all time.

\begin{figure}[t!]
\centering
\includegraphics[width=\linewidth,height=\textheight,keepaspectratio]{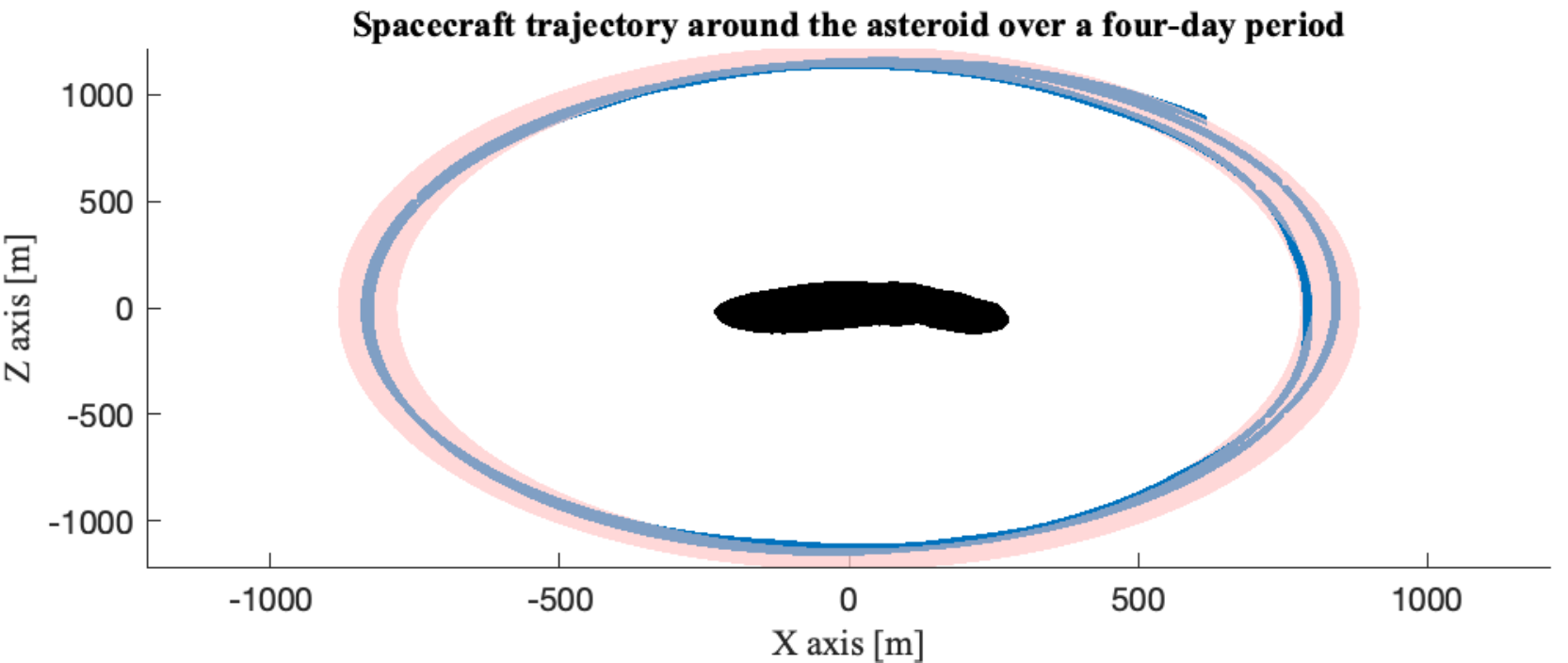}
\caption{The figure shows the trajectory as projected on the x-z plane, which is perpendicular to the ecliptic plane. The desired orbit has right ascension of the ascending node (RAAN) of $45^\circ$ and an inclination angle of $90^\circ$.}

\label{fig: trajectory}
\end{figure}

\begin{figure}[t!]
\centering
\includegraphics[width=0.98\linewidth,height=\textheight,keepaspectratio]{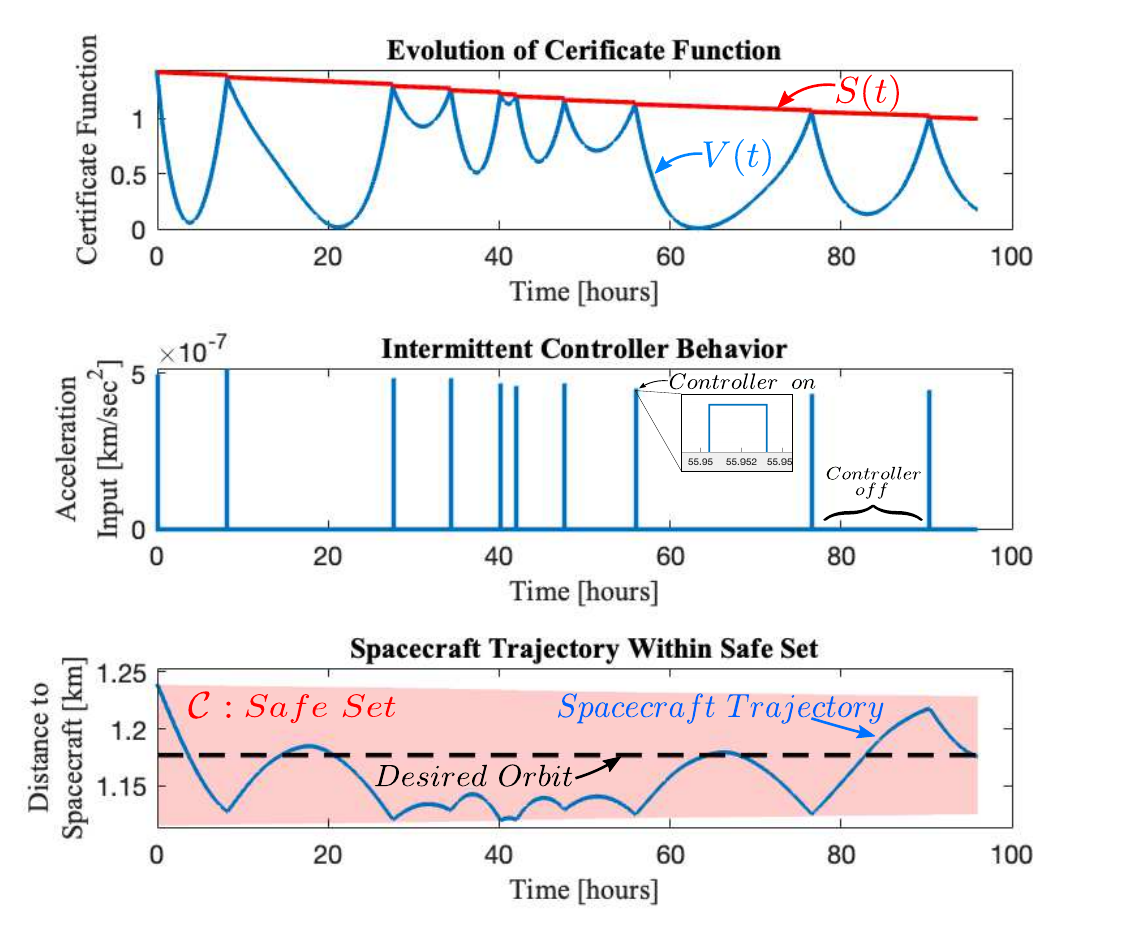}
\caption{(Top) Certificate function along the trajectory. (Middle) Acceleration input. Each spike lasts no longer than $T_{\max}$ of 10 seconds. (Bottom) Distance of the spacecraft from the central body with estimated safe performance set $\Cc$ getting smaller over time.}
\label{fig: controller}
\end{figure}

\section{Conclusion}
This paper synthesized trigger schemes that permit the intermittent implementation of state-feedback controllers
in both the context of stabilization and safety. In particular, our trigger scheme turns the controller off when it no longer provides satisfactory performance, and turns the controller back on before all the prior progress made by the controller is undone. To demonstrate the effectiveness of our results, we applied them to the problem of spacecraft orbit stabilization. Future work will study the co-design problem, seeking controllers that can lengthen the controller inactive dwell time for as long as possible.  This will be studied in the context of stability, safety, and their interaction.

\vspace{0.2cm}
\noindent \textbf{Acknowledgement.}
The authors would like to thank JPL for their feedback on the application of these ideas to spacecraft, and Saptarshi Bandyopadhyay in particular for discussions and providing the  eighth-order harmonics gravity model used in our simulation results.

\bibliography{alias,PO,main-Pio}
\bibliographystyle{ieeetr}
\end{document}